\newcolumntype{C}{>{\centering\arraybackslash}X}
\newcommand{\cmnt}[1]{}
\newcommand{\ignore}[1]{}
\newcommand{\tabref}[1]{Table~\ref{tab:#1}}
\newcommand{\grfref}[1]{Graph~\ref{graph:#1}}
\newcommand{\remove}[1]{}
\begin{document}
\title{Unleashing Multicore Strength for Efficient Execution of Blockchain Transactions}
%
%

\titlerunning{Unleashing Multicore Strength for Efficient Execution of Transactions}

\author{Ankit Ravish\inst{1} \and
Akshay Tejwani\inst{1} \and
Piduguralla Manaswini\inst{1} \and
Sathya Peri\inst{1}}
\authorrunning{Ankit et al.}
%
\institute{Indian Institute of Technology Hyderabad, Kandi, Telangana 502285
\email{\{cs21resch11014,cs21mtech12015,cs20resch11007\}@iith.ac.in, sathya\_p@cse.iith.ac.in}}
\maketitle              
\begin{abstract}

Blockchain technology is booming up the digital world in recent days and thus paved a way for creating separate blockchain network for various industries.
This technology is characterized by its distributed, decentralized, and immutable ledger system which serves as a fundamental platform for managing smart contract transactions (SCTs).
However, these self-executing codes implemented using blockchains undergo sequential validation within a block which introduces performance bottlenecks.
In response, this paper introduces a framework called the Multi-Bin Parallel Scheduler (MBPS) designed for parallelizing blockchain smart contract transactions to leverage the capabilities of multicore systems.
Our proposed framework facilitates concurrent execution of SCTs, enhancing performance by allowing non-conflicting transactions to be processed simultaneously while preserving deterministic order.
The framework comprises of three vital stages: conflict detection, bin creation and execution.
We conducted an evaluation of our MBPS framework in Hyperledger Sawtooth v1.2.6, revealing substantial performance enhancements compared to existing parallel SCT execution frameworks across various smart contract applications.
This research contributes to the ongoing optimization efforts in blockchain technology demonstrating its potential for scalability and efficiency in real-world scenarios.

\keywords{Blockchain  \and Smart Contracts \and Parallel Execution \and Conflict Detection.}
\end{abstract}

\section{Introduction}
\label{sec:introduction}
Blockchain \cite{article} is an innovative and decentralized digital ledger technology that facilitates secure, transparent, and tamper-resistant record-keeping. Unlike traditional centralized databases, a blockchain comprises a chain of blocks, each containing a set of transactions. These blocks are linked together using cryptographic hashes, ensuring data integrity and immutability. This distributed ledger system relies on a consensus mechanism among network participants, making it a trustless and open platform for various applications.

The contemporary world is experiencing a digital revolution, marked by an explosive growth in data and digital transactions. The necessity of blockchain in today's world is underscored by its numerous advantages. Traditional centralized systems are vulnerable to fraud, cyberattacks, and manipulation. Blockchain addresses these issues and ensures data integrity, reduces fraud, and eliminates intermediaries in financial transactions, making it a cornerstone for financial institutions \cite{inbook}. Additionally, it enables secure supply chain management, simplifies identity verification, and ensures the integrity of critical records \cite{tapscott2016blockchain}.

Smart contracts, another pillar of blockchain technology, represent self-executing agreements with the terms of the contract directly written into code. These contracts run on blockchain technology and are designed to automatically execute when predefined conditions are met, eliminating the need for intermediaries and reducing the risk of fraud and manipulation \cite{journals/firstmonday/Szabo97}.
Smart contracts reduce the chances of disputes by providing a transparent, automated, and tamper-proof mechanism for executing agreements. They also significantly reduce transaction costs and the time required for contract execution.

Blockchain operates through a network of nodes (computers) that work collaboratively to validate and record transactions in a chronological sequence of blocks. The process of validating transactions and smart contracts begins with a block producer, who gathers a group of pending transactions and attempts to create a new block. However, before this block is added to the blockchain, it must undergo a rigorous validation process by all participating nodes \cite{antonopoulos2014mastering, wood2014ethereum}.

Smart contracts within a blockchain network are executed serially, meaning one contract is processed at a time. This sequential execution ensures that the blockchain maintains a single, consistent state at any given moment but can lead to bottlenecks, limiting the system's capacity to handle a large number of transactions simultaneously. This bottleneck results from the serial nature of smart contract execution, causing delays and potentially hindering scalability.
Centralized servers, as seen in traditional financial systems, can process thousands of transactions per second, enabling high-frequency trading and efficient payment processing. For instance, Visa, a centralized payment network, claims to handle over 24,000 TPS. In contrast, decentralized blockchain networks often face scalability challenges, limiting their TPS. For example, Bitcoin \cite{bitcoin} has a TPS capacity of approximately 7-10 transactions per second, while Ethereum's \cite{ethereum} throughput is around 30 TPS.


\section{Motivation}
\label{sec:motivation}
Over the past few years, blockchain technology has witnessed a surge in popularity, gaining widespread adoption across diverse domains. However, a significant challenge that has arisen is the issue of scalability.
Parallel execution of smart contract transactions offers a compelling solution to the bottlenecks inherent in serial execution. By allowing multiple transactions to be processed simultaneously, blockchain networks can significantly enhance their throughput and scalability. It reduces transaction confirmation times, making blockchain applications more responsive. It also optimizes resource utilization and enhances the overall performance of blockchain networks, making them more efficient and adaptable to the demands of modern applications \cite{Buterin2013}.

The solution to this problem, which involves the parallel execution of smart contract transactions, is highly complex.
When a block producer and validators execute transactions in parallel, there exists a risk that validators may execute conflicting transactions in a different order than the block producer. Consequently, the serialization order implemented by the block producer might differ from that of the validator, potentially leading to the validator arriving at a final state distinct from that of the block producer. This discrepancy could result in the incorrect rejection of a valid block by the validator.

\vspace{-0.2cm}

\begin{figure}[htbp]
    \centering
    \begin{subfigure}[b]{0.3\textwidth}
        \centering
        \begin{tikzpicture}
            \draw[color=red] (0,0.2) -- (0,-0.2);
            \draw[color=red] (0,0.2) -- (0.1,0.2);
            \draw[color=red] (0,-0.2) -- (0.1,-0.2);
            
            \draw[color=red] (0,0) -- (1.2,0);
            \draw (0.6,0.1) -- (0.6,-0.1);

            \node[left] at (0,0) {$T_{1}$};
            \node[above] at (0.6,0.2) {\tiny \textit{transfer(A,B,10)}};
            \node[right] at (1.2,0) {$C_{1}$};

            \draw[color=red] (1.2,0.2) -- (1.2,-0.2);
            \draw[color=red] (1.2,0.2) -- (1.1,0.2);
            \draw[color=red] (1.2,-0.2) -- (1.1,-0.2);


            \draw[color=blue] (1.5,-0.3) -- (1.5,-0.7);
            \draw[color=blue] (1.5,-0.3) -- (1.6,-0.3);
            \draw[color=blue] (1.5,-0.7) -- (1.6,-0.7);
            
            \draw[color=blue] (1.5,-0.5) -- (2.7,-0.5);
            \draw (2.1,-0.4) -- (2.1,-0.6);

            \node[left] at (1.5,-0.5) {$T_{2}$};
            \node[below] at (2.1,-0.7) {\tiny \textit{transfer(C,D,10)}};
            \node[right] at (2.7,-0.5) {$C_{2}$};

            \draw[color=blue] (2.7,-0.3) -- (2.7,-0.7);
            \draw[color=blue] (2.7,-0.3) -- (2.6,-0.3);
            \draw[color=blue] (2.7,-0.7) -- (2.6,-0.7);
            
        \end{tikzpicture}
        \caption{\scriptsize Sequential Execution}
        \label{fig:seqex-a}
    \end{subfigure}
    \begin{subfigure}[b]{0.3\textwidth}
        \centering
        \begin{tikzpicture}
            \draw[color=red] (0,0.2) -- (0,-0.2);
            \draw[color=red] (0,0.2) -- (0.1,0.2);
            \draw[color=red] (0,-0.2) -- (0.1,-0.2);
            
            \draw[color=red] (0,0) -- (1.2,0);
            \draw (0.6,0.1) -- (0.6,-0.1);

            \node[left] at (0,0) {$T_{1}$};
            \node[above] at (0.6,0.2) {\tiny \textit{transfer(A,B,10)}};
            \node[right] at (1.2,0) {$C_{1}$};

            \draw[color=red] (1.2,0.2) -- (1.2,-0.2);
            \draw[color=red] (1.2,0.2) -- (1.1,0.2);
            \draw[color=red] (1.2,-0.2) -- (1.1,-0.2);


            \draw[color=blue] (0.4,-0.3) -- (0.4,-0.7);
            \draw[color=blue] (0.4,-0.3) -- (0.5,-0.3);
            \draw[color=blue] (0.4,-0.7) -- (0.5,-0.7);
            
            \draw[color=blue] (0.4,-0.5) -- (1.6,-0.5);
            \draw (1,-0.4) -- (1,-0.6);

            \node[left] at (0.4,-0.5) {$T_{2}$};
            \node[below] at (1,-0.7) {\tiny \textit{transfer(C,D,10)}};
            \node[right] at (1.6,-0.5) {$C_{2}$};

            \draw[color=blue] (1.6,-0.3) -- (1.6,-0.7);
            \draw[color=blue] (1.6,-0.3) -- (1.5,-0.3);
            \draw[color=blue] (1.6,-0.7) -- (1.5,-0.7);
        \end{tikzpicture}
        \caption{\scriptsize Parallel Execution 1}
        \label{fig:parex1-a}
    \end{subfigure}
    \begin{subfigure}[b]{0.3\textwidth}
        \centering
        \begin{tikzpicture}
            \draw[color=red] (0.4,0.2) -- (0.4,-0.2);
            \draw[color=red] (0.4,0.2) -- (0.5,0.2);
            \draw[color=red] (0.4,-0.2) -- (0.5,-0.2);
            
            \draw[color=red] (0.4,0) -- (1.6,0);
            \draw (1,0.1) -- (1,-0.1);

            \node[left] at (0.4,0) {$T_{1}$};
            \node[above] at (1,0.2) {\tiny \textit{transfer(A,B,10)}};
            \node[right] at (1.6,0) {$C_{1}$};

            \draw[color=red] (1.6,0.2) -- (1.6,-0.2);
            \draw[color=red] (1.6,0.2) -- (1.5,0.2);
            \draw[color=red] (1.6,-0.2) -- (1.5,-0.2);


            \draw[color=blue] (0,-0.3) -- (0,-0.7);
            \draw[color=blue] (0,-0.3) -- (0.1,-0.3);
            \draw[color=blue] (0,-0.7) -- (0.1,-0.7);
            
            \draw[color=blue] (0,-0.5) -- (1.2,-0.5);
            \draw (0.6,-0.4) -- (0.6,-0.6);

            \node[left] at (0,-0.5) {$T_{2}$};
            \node[below] at (0.6,-0.7) {\tiny \textit{transfer(C,D,10)}};
            \node[right] at (1.2,-0.5) {$C_{2}$};

            \draw[color=blue] (1.2,-0.3) -- (1.2,-0.7);
            \draw[color=blue] (1.2,-0.3) -- (1.1,-0.3);
            \draw[color=blue] (1.2,-0.7) -- (1.1,-0.7);
        \end{tikzpicture}
        \caption{\scriptsize Parallel Execution 2}
        \label{fig:parex2-a}
    \end{subfigure}
    \caption{ Sequential and Parallel Execution (No Conflicting Addresses)}
    \label{fig:no_conflict_address}
\end{figure}
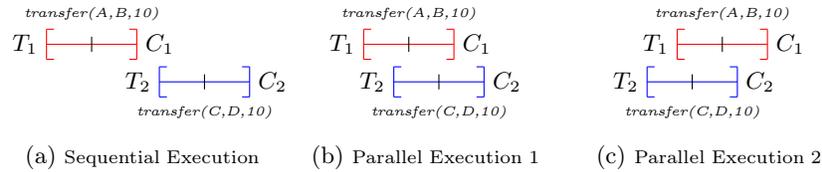

\vspace{-1cm}

\begin{figure}[htbp]
    \centering
    \begin{subfigure}[b]{0.3\textwidth}
        \centering
        \begin{tikzpicture}
            \draw[color=red] (0,0.2) -- (0,-0.2);
            \draw[color=red] (0,0.2) -- (0.1,0.2);
            \draw[color=red] (0,-0.2) -- (0.1,-0.2);
            
            \draw[color=red] (0,0) -- (1.2,0);
            \draw (0.6,0.1) -- (0.6,-0.1);

            \node[left] at (0,0) {$T_{1}$};
            \node[above] at (0.6,0.2) {\tiny \textit{transfer(A,B,10)}};
            \node[right] at (1.2,0) {$C_{1}$};

            \draw[color=red] (1.2,0.2) -- (1.2,-0.2);
            \draw[color=red] (1.2,0.2) -- (1.1,0.2);
            \draw[color=red] (1.2,-0.2) -- (1.1,-0.2);

            \draw[color=blue] (1.5,-0.3) -- (1.5,-0.7);
            \draw[color=blue] (1.5,-0.3) -- (1.6,-0.3);
            \draw[color=blue] (1.5,-0.7) -- (1.6,-0.7);
            
            \draw[color=blue] (1.5,-0.5) -- (2.7,-0.5);
            \draw (2.1,-0.4) -- (2.1,-0.6);

            \node[left] at (1.5,-0.5) {$T_{2}$};
            \node[below] at (2.1,-0.7) {\tiny \textit{transfer(B,A,10)}};
            \node[right] at (2.7,-0.5) {$C_{2}$};

            \draw[color=blue] (2.7,-0.3) -- (2.7,-0.7);
            \draw[color=blue] (2.7,-0.3) -- (2.6,-0.3);
            \draw[color=blue] (2.7,-0.7) -- (2.6,-0.7);
            
        \end{tikzpicture}
        \caption{\scriptsize Sequential Execution}
        \label{fig:seqex-b}
    \end{subfigure}
    \begin{subfigure}[b]{0.3\textwidth}
        \centering
        \begin{tikzpicture}
            \draw[color=red] (0,0.2) -- (0,-0.2);
            \draw[color=red] (0,0.2) -- (0.1,0.2);
            \draw[color=red] (0,-0.2) -- (0.1,-0.2);
            
            \draw[color=red] (0,0) -- (1.2,0);
            \draw (0.6,0.1) -- (0.6,-0.1);

            \node[left] at (0,0) {$T_{1}$};
            \node[above] at (0.6,0.2) {\tiny \textit{transfer(A,B,10)}};
            \node[right] at (1.2,0) {$C_{1}$};

            \draw[color=red] (1.2,0.2) -- (1.2,-0.2);
            \draw[color=red] (1.2,0.2) -- (1.1,0.2);
            \draw[color=red] (1.2,-0.2) -- (1.1,-0.2);


            \draw[color=blue] (0.4,-0.3) -- (0.4,-0.7);
            \draw[color=blue] (0.4,-0.3) -- (0.5,-0.3);
            \draw[color=blue] (0.4,-0.7) -- (0.5,-0.7);
            
            \draw[color=blue] (0.4,-0.5) -- (1.6,-0.5);
            \draw (1,-0.4) -- (1,-0.6);

            \node[left] at (0.4,-0.5) {$T_{2}$};
            \node[below] at (1,-0.7) {\tiny \textit{transfer(B,A,10)}};
            \node[right] at (1.6,-0.5) {$C_{2}$};

            \draw[color=blue] (1.6,-0.3) -- (1.6,-0.7);
            \draw[color=blue] (1.6,-0.3) -- (1.5,-0.3);
            \draw[color=blue] (1.6,-0.7) -- (1.5,-0.7);
        \end{tikzpicture}
        \caption{\scriptsize Parallel Execution 1}
        \label{fig:parex1-b}
    \end{subfigure}
    \begin{subfigure}[b]{0.3\textwidth}
        \centering
        \begin{tikzpicture}
            \draw[color=red] (0.4,0.2) -- (0.4,-0.2);
            \draw[color=red] (0.4,0.2) -- (0.5,0.2);
            \draw[color=red] (0.4,-0.2) -- (0.5,-0.2);
            
            \draw[color=red] (0.4,0) -- (1.6,0);
            \draw (1,0.1) -- (1,-0.1);

            \node[left] at (0.4,0) {$T_{1}$};
            \node[above] at (1,0.2) {\tiny \textit{transfer(A,B,10)}};
            \node[right] at (1.6,0) {$C_{1}$};

            \draw[color=red] (1.6,0.2) -- (1.6,-0.2);
            \draw[color=red] (1.6,0.2) -- (1.5,0.2);
            \draw[color=red] (1.6,-0.2) -- (1.5,-0.2);


            \draw[color=blue] (0,-0.3) -- (0,-0.7);
            \draw[color=blue] (0,-0.3) -- (0.1,-0.3);
            \draw[color=blue] (0,-0.7) -- (0.1,-0.7);
            
            \draw[color=blue] (0,-0.5) -- (1.2,-0.5);
            \draw (0.6,-0.4) -- (0.6,-0.6);

            \node[left] at (0,-0.5) {$T_{2}$};
            \node[below] at (0.6,-0.7) {\tiny \textit{transfer(B,A,10)}};
            \node[right] at (1.2,-0.5) {$C_{2}$};

            \draw[color=blue] (1.2,-0.3) -- (1.2,-0.7);
            \draw[color=blue] (1.2,-0.3) -- (1.1,-0.3);
            \draw[color=blue] (1.2,-0.7) -- (1.1,-0.7);
        \end{tikzpicture}
        \caption{\scriptsize Parallel Execution 2}
        \label{fig:parex2-b}
    \end{subfigure}
    \caption{ Sequential and Parallel Execution (Conflicting Addresses)}
    \label{fig:conflict_address}
\end{figure}
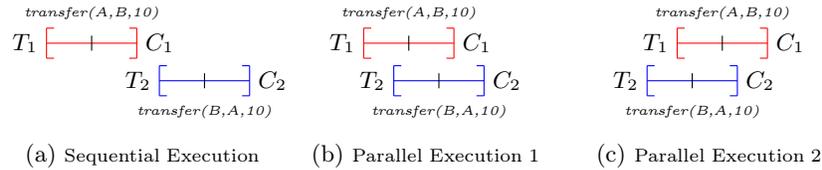

To illustrate this challenge, consider Figures \ref{fig:no_conflict_address} and \ref{fig:conflict_address}. Figure \ref{fig:no_conflict_address} shows two non-conflicting transactions, $T_1$ and $T_2$, operating on different shared data objects. In this scenario, the execution order between $T_1$ and $T_2$ is insignificant. Conversely, Figure \ref{fig:conflict_address} illustrates two conflicting transactions, $T_1$ and $T_2$, working on the same shared data objects. In this case, the validators must maintain the same order of execution as the block producer to avoid distinct final states. For instance, with an initial state of $A$ and $B$ set at 10, parallel execution 1 could yield a final state of $A$ as 20 and $B$ as 0, while parallel execution 2 could result in a final state of $A$ as 0 and $B$ as 20.

Such potential discrepancies in final states pose a risk of valid blocks being incorrectly rejected by the validators, which is an undesirable outcome. Therefore, maintaining transaction order consistency among validators during parallel execution is imperative to ensure the accurate validation of blocks. This issue underscores the need for robust mechanisms to synchronize the execution of transactions across the blockchain network, preventing unintended variations in the final state and ensuring the integrity of the validation process.

Thus, this paper introduces the Multi-Bin Parallel Scheduler (MBPS) framework, specifically designed to schedule the parallel execution of smart contracts, thereby addressing the aforementioned issues.


\noindent Our contributions to this paper are as follows:
\vspace{-0.6em}
\begin{itemize}
    \item We provided a comprehensive overview of related work (Section \ref{sec:related}) that aligns with our proposed approach, contextualizing our research within existing studies and methodologies.
    \item We presented detailed implementation aspects of our proposed framework (Section \ref{sec:framework}), including in-depth explanations of the algorithms and the development of three distinct frameworks: Standard MBPS (Section \ref{sec:standard}), Assisted MBPS (Section \ref{sec:assisted}) and Lockfree MBPS (Section \ref{sec:lockfree}). Additionally, we introduced relevant lemmas to support our framework.
    \item We detailed the experimental setup (Section \ref{sec:experiments}) used for evaluation and presented the results, providing a thorough analysis of the methods and outcomes.
\end{itemize}

\section{Related Work}
\label{sec:related}
Numerous strategies have been developed to enhance the efficiency and functionality of blockchain technology. These approaches address critical issues such as scalability, security, and consensus mechanisms. These approaches are crucial for addressing the inherent limitations of traditional blockchain networks, which impede the widespread adoption of blockchain technology. One highly effective approach involves utilizing sharding, a technique where the blockchain network is divided into smaller units known as shards, enabling each shard to handle transactions autonomously. This technique significantly increases the throughput of the blockchain by allowing multiple transactions to execute simultaneously across different shards. 

Dickerson et al. \cite{Dickerson+:ACSC:PODC:2017}  introduce an innovative approach for miners and validators to concurrently execute smart contracts, leveraging concepts from software transactional memory. This method involves miners speculatively executing contracts in parallel, enabling non-conflicting transactions to proceed simultaneously. The transactions within a block are organized into a serializable concurrent schedule, represented as a deterministic fork-join program. Validators subsequently re-execute the miner's parallel schedule deterministically yet concurrently based on this encoded program. Saraph et al. \cite{Vikram&Herlihy:EmpSdy-Con:Tokenomics:2019} introduced an Ethereum blockchain algorithm that categorizes transactions into concurrent and sequential bins based on their read and write sets, estimating parallel execution for the former and serial execution for the latter.
The paper \cite{Anjana2020netysObjSC, Anjana2021OptSmart} proposes a framework for concurrent execution of smart contracts using optimistic Software Transactional Memory systems, demonstrating the speedup achieved by concurrent miner and validator over serial counterparts, and providing experimental evaluation on benchmarks from Solidity documentation.

In paper \cite{10.1007/978-3-031-39698-4_13}, the authors have proposed an approach to enhance the performance of blockchain-based smart contract execution through the incorporation of a direct acyclic graph (DAG) based parallel scheduler framework. Liu et al. \cite{10.1109/TPDS.2021.3095234} devised a new approach to smart contract execution, separating consensus nodes from execution nodes to enable parallel transaction execution while allowing consensus nodes to order transactions asynchronously.
Baheti et al. \cite{Baheti2019DiPETrans} presented the DiPETrans framework for distributed transaction execution, where trusted nodes in the blockchain network collaborate in executing transactions and simultaneously perform PoW using a leader-follower approach.

Yan et al. \cite{article-yan} optimized concurrency for each shard, analyzing smart contract features, clustering transactions, and developing a Serializable Schedule and Variable Shadow Speculative Concurrency Control (SCC-VS) algorithm considering factors like transaction frequency, execution time, and conflict rate.
Addressing parallel mode challenges and synchronization issues, \cite{10.1145/3278312.3278321} implemented the proposed model using multi-thread technology and introduced a transaction splitting algorithm to resolve synchronization problems.

\section{Proposed Framework}
\label{sec:framework}

This section presents the design of our proposed Multi-Bin Parallel Scheduler (MBPS) framework, detailing its architecture and key components.
The proposed MBPS framework facilitates parallel transaction execution while preserving a deterministic order, thereby leveraging the capabilities of multicore systems to enhance the efficiency of blockchain ecosystems.
This framework introduces three distinct versions to address specific aspects of smart contract execution in blockchain ecosystems: Standard MBPS, Assisted MBPS, and Lockfree MBPS.

The first variant, referred to as \textbf{Standard MBPS}, incorporates a barrier in its design but operates without the assistance of helper threads. This configuration provides a baseline approach with a controlled synchronization point, which allows organized parallel execution of smart contracts.
In contrast, the second variant, \textbf{Assisted MBPS}, employs a barrier in its structure while incorporating helper threads. This variant enhances the standard framework by introducing additional threads that assist in optimizing the execution process when some threads are very slow.
The third variant, \textbf{Lockfree MBPS}, diverges from the barrier-based approach and operates without a synchronization barrier. This barrier-free design and the inclusion of helper threads promote a lock-free execution environment. The efficacy of this variant becomes particularly evident when certain threads experience delays, as the absence of a barrier eliminates the need to wait for other threads to join before progressing to subsequent stages.
\tabref{mbps_frameowrk_comparison} comprehensively compares the different MBPS frameworks, highlighting their distinctive features and characteristics.


\begin{table}[h]
\centering
\caption{Comparison of Different MBPS Framework}
\label{tab:mbps_frameowrk_comparison}
\renewcommand{\arraystretch}{1.6}
\begin{tabularx}{0.8 \textwidth}{||C||C|C||}
\hline
\textbf{Framework} & \textbf{Barrier Free} & \textbf{Helper Threads} \\
\hline
\hline
\textbf{Standard MBPS} & $\times$ & $\times$ \\
\hline
\textbf{Assisted MBPS} & $\times$ & \checkmark \\
\hline
\textbf{Lockfree MBPS} & \checkmark & \checkmark \\
\hline
\end{tabularx}
\end{table}


Each MBPS framework undergoes three crucial stages: conflict detection, bin assignment and execution.

\textbf{Conflict Detection:} In this stage, the objective is to identify conflicts between transactions to assign them to the correct bins. Two transactions, denoted as $T_i$ and $T_j$, are said to be in conflict if any of the following conditions hold:
\begin{itemize}
\item $T_i$ reads a data item, and $T_j$ writes to the same data item
\item $T_i$ writes to a data item, and $T_j$ reads the same data item
\item Both $T_i$ and $T_j$ write to the same data item
\end{itemize}
\vspace{-0.2cm}
Algorithms \ref{alg:conflict_check},\ref{alg:conflict_set_idenification_no_helper} and \ref{alg:conflict_set_idenification_helper} outline the process of detecting conflicts between two transactions.

\textbf{Bin Assignment:} In this stage, the objective is to assign bins to the transactions and ensure each bin comprises a list of transactions independent of each other. It means that any two transactions within the same bin do not conflict or update the same data items. The specific procedures for assigning bins in various MBPS frameworks are detailed in Algorithms \ref{alg:bin_num_assign_no_helper}, \ref{alg:calculate_bin_algo_no_helper}, \ref{alg:bin_num_assign_helper} and \ref{alg:calculate_bin_algo_helper}. The algorithms are designed to allocate transactions to bins as compactly as possible, thereby optimizing the number of bins.

\textbf{Transaction Execution:} In this stage, non-conflicting transactions within the same group are eligible for parallel execution once assigned to their respective bins. This framework streamlines the parallel execution process by adhering to an organized sequence, where transactions in Bin 1 are executed first, succeeded by executing Bin 2 transactions, and so forth. Algorithm \ref{alg:exec_transaction} details the process of selecting transactions bin by bin for execution.


\begin{algorithm}
\caption{Function to Check Conflicts}
\label{alg:conflict_check}
\tiny
\algrenewcommand\alglinenumber[1]{\tiny #1:}
\begin{algorithmic}[1]
\Function{CheckConflicts}{$\text{\textit{txnA, txnB}}$}
    \If{$\text{\textit{txnA.writeList}} \cap \text{\textit{txnB.writeList}} \neq \phi$}
        \State \textbf{return true}

    \ElsIf{$\text{\textit{txnA.readList}} \cap \text{\textit{txnB.writeList}} \neq \phi$}
        \State \textbf{return true}
    \ElsIf{$\text{\textit{txnA.writeList}} \cap \text{\textit{txnB.readList}} \neq \phi$}
        \State \textbf{return true}
    \EndIf
\EndFunction
\end{algorithmic}
\end{algorithm}


Algorithm \ref{alg:conflict_check} verifies conflicts between two transactions (txnA and txnB) by scrutinizing their read and write sets. Any overlap between their write or read sets signifies potential data dependencies, triggering the function to return true, indicating a conflict.


\begin{algorithm}
\tiny
\caption{Conflict Set Identification Algorithm - Without Helper Threads}
\label{alg:conflict_set_idenification_no_helper}
\begin{multicols}{2}

\begin{algorithmic}[1]
\algrenewcommand\alglinenumber[1]{\tiny #1:}

\Function{BinConflictSet}{}
    
    \State $flag \gets 0$
    \State $i \gets \text{atomicFetchAdd}(i1, 1)$
        
    \While{$\text{i} < \left|\text{\textit{Txns}}\right| $}
        
        \State $\text{set<int>} *lowerConflicts$

        \If{$\left|\text{conflictArray}[i]\right| = \mathit{\emptyset}$}
            \State $txn A \gets Txns[i]$
            \For{$j \gets 0$ \textbf{to} $i-1$}
                \State $txn B \gets Txns[j]$
                \If{\Call{\textbf{CheckConflicts}}{\text{\textit{\textbf{A, B}}}} = 1}
                    \State $(*lowerConflicts).insert(j)$
                \EndIf
            \EndFor
            \State $\text{conflictArray}[i] \gets lowerConflicts$
        \EndIf
        \State $i \gets \text{atomicFetchAdd}(i1, 1)$
    \EndWhile
\EndFunction

\end{algorithmic}
\end{multicols}
\end{algorithm}


Algorithms \ref{alg:conflict_set_idenification_no_helper} and \ref{alg:conflict_set_idenification_helper}, namely \textit{BinConflictSet} and \textit{BinConflictSetHelper}, iterate through transactions to identify conflicts among them. These algorithms employ a \textit{conflictArray} to store conflict sets and a \textit{lowerConflicts} set to monitor conflicting transactions. Through comparison with preceding transactions, conflicts are identified and logged in the \textit{conflictArray}. In Algorithm \ref{alg:conflict_set_idenification_no_helper}, threads claim transactions and detect conflicts, while Algorithm \ref{alg:conflict_set_idenification_helper} introduces helper threads to ensure efficient conflict resolution, allowing faster threads to overtake slower ones.


\begin{algorithm}
\tiny
\caption{Conflict Set Identification Algorithm - Helper Threads}
\label{alg:conflict_set_idenification_helper}
\begin{multicols}{2}

\begin{algorithmic}[1]
\algrenewcommand\alglinenumber[1]{\tiny #1:}

\Function{BinConflictSetHelper}{}
    \State $conflictTxns \gets 0$
    \State $localCount \gets 0$
    \State $flag \gets 0$

    \While{$\text{\textit{conflictTxns}} < \left|\text{\textit{Txns}}\right| $}
        \State $i \gets \text{atomicFetchAdd}(i1, 1) \mod \left|\text{\textit{Txns}}\right| $
        
        \State $\text{set<int>} *lowerConflicts$

        \If{$\text{\textit{conflictArray}}[\textit{i}]$ \textit{is} $\mathit{\emptyset}$}
            \State $localCount \gets 0$
            \If{$flag = 1$}
                \State $\text{atomicFetchAdd}(threadCounter1, -1)$
            \EndIf
            \State $txn A \gets Txns[i]$
            \For{$j \gets 0$ \textbf{to} $i-1$}
                \State $txn B \gets Txns[j]$
                \If{\Call{\textbf{CheckConflicts}}{\text{\textit{\textbf{A, B}}}} = 1}
                    \State $(*lowerConflicts).insert(j)$
                \EndIf
            \EndFor
            \State $localConf \gets lowerConflicts$
            \State $temp \gets \text{NULL}$

            \If{$\text{conflictArray}[i].\text{CAS}(temp, localConf)$}
                \State $\text{atomicFetchAdd}(conflictTxns, 1)$
            \EndIf
        \Else
            \State $localCount \gets localCount + 1$

            \If{$localCount = \left|\text{\textit{Txns}}\right| \text{ and } flag = 0$}
                \State $flag \gets 1$
                \State $\text{atomicFetchAdd}(threadCounter1, 1)$
            \EndIf
        \EndIf

        \If{$(\text{threadCounter1} = \text {numThreads})$ \textbf{or} $(localCount = \left|\text{\textit{Txns}}\right|)$}
            \State $\text{atomicStore}(\textit{conflictTxns}, \left|\text{\textit{Txns}}\right|)$
        \EndIf
    \EndWhile
\EndFunction

\end{algorithmic}
\end{multicols}
\end{algorithm}


\begin{algorithm}
\tiny
\caption{Bin Number Assignment Algorithm - Without Helper Threads}
\label{alg:bin_num_assign_no_helper}
\begin{multicols}{2}
\algrenewcommand\alglinenumber[1]{\tiny #1:}
\begin{algorithmic}[1]
    \Function{BinNumAssign}{}
        \State $flag \gets 0$
        \State $i \gets \text{atomicFetchAdd}(i2, 1)$
        
        \While{$\text{i} < \left|\text{\textit{Txns}}\right| $}
                
                \State $allotedBin \gets \Call {\text{\textbf{CalculateBin}}}{\textit{\textbf{i}}}$
                
                \State $localVal \gets \text{allotedBin}$
                \State $\text{set<int>} *copy1, *copy2, *tempCopy$
                
                \Repeat
                    \State $Copy1 \gets \text{binArray}[\text{allotedBin}]$
                    
                    \If{$\text{copy1} = \text{NULL}$}
                        \State $(*tempCopy).insert(i)$
                        \State $\text{copy2} \gets tempCopy$
                    \Else
                        \If{$i \in *copy1$}
                            \State \textbf{break}
                        \EndIf
                        
                        \For{$\text{a} \in *copy1$}
                            \State $(*\text{copy2}).insert(\text{a})$
                        \EndFor
                        
                        \State $(*\text{copy2}).insert(i)$
                    \EndIf
                \Until{$\text{binArray}[\text{allotedBin}].\text{CAS}(\text{copy1}, \text{copy2})$}

                \State $\text{initialBin}[i] \gets \text{localVal}$
                \State $i \gets \text{atomicFetchAdd}(i2, 1)$
                
        \EndWhile

    \EndFunction
\end{algorithmic}
\end{multicols}
\end{algorithm}


\begin{algorithm}
\tiny
\caption{Calculate Bin Algorithm - Without Helper Threads}
\label{alg:calculate_bin_algo_no_helper}
\algrenewcommand\alglinenumber[1]{\tiny #1:}
\begin{algorithmic}[1]
    \Function{CalculateBin}{$i$}
        \State $\text{currentBin} \gets -1$
        \If{$|\text{conflictArray}[i]| \neq \mathit{\emptyset}$}
            \For{$\text{conflictTxn} \in \text{conflictArray}[i]$}
                \While{$\text{initialBin}[\text{conflictTxn}] = -1$}
                    \State \text{continue}
                \EndWhile
                \If{$\text{initialBin}[\text{conflictTxn}] > \text{currentBin}$}
                    \State $\text{currentBin} \gets \max(\text{currentBin}, \text{initialBin}[\text{conflictTxn}])$
                \EndIf
            \EndFor
        \EndIf
        \State $\text{currentBin} \gets \text{currentBin} + 1$
        \State \Return $\text{currentBin}$
        
    \EndFunction
\end{algorithmic}
\end{algorithm}


Algorithm \ref{alg:bin_num_assign_no_helper} and Algorithm \ref{alg:bin_num_assign_helper} are tailored for assigning bin numbers to transactions within a concurrent environment. Algorithm \ref{alg:bin_num_assign_no_helper} iterates through transactions, computing the bin number for each transaction using Algorithm \ref{alg:calculate_bin_algo_no_helper}, and subsequently assigns the transaction to the corresponding bin. It leverages atomic operations for thread safety and guarantees no transaction is assigned to a bin until all its dependencies are resolved. Algorithm \ref{alg:bin_num_assign_helper} enhances concurrency by employing helper threads. It operates similarly to Algorithm \ref{alg:bin_num_assign_no_helper} but with additional logic for efficient handling of helper threads. Helper threads aid in processing transactions concurrently, potentially speeding up the assignment process.

Algorithm \ref{alg:calculate_bin_algo_no_helper} calculates the bin number for a given transaction based on conflicts with other transactions. It identifies conflicts and assigns the transaction to the next available bin. Algorithm \ref{alg:calculate_bin_algo_helper} shares a similar logic with Algorithm \ref{alg:calculate_bin_algo_no_helper} for bin number calculation but incorporates modifications to manage concurrent execution using helper threads.

Algorithm \ref{alg:exec_transaction}, ExecuteTransaction, selects transactions bin by bin and forwards them to the scheduler for execution.


\begin{algorithm}
\tiny
\caption{Bin Number Assignment Algorithm - Helper Threads}
\label{alg:bin_num_assign_helper}
\begin{multicols}{2}
\algrenewcommand\alglinenumber[1]{\tiny #1:}
\begin{algorithmic}[1]
    \Function{BinNumAssignHelper}{}
        \State $processedTxns \gets 0$
        \State $localCount \gets 0$
        \State $flag \gets 0$
        
        \While{$\text{processedTxns} < \left|\text{\textit{Txns}}\right| $}
            \State $i \gets \text{atomicFetchAdd}(i2, 1) \mod \left|\text{\textit{Txns}}\right|$
            
            \If{$\text{initialBin}[i] = -1$}
                \State $localCount \gets 0$
                
                \If{$\text{flag} = 1$}
                    \State $\text{atomicFetchAdd}(\text{threadCounter2}, -1)$
                \EndIf
                
                \State $allotedBin \gets \Call {\text{\textbf{CalculateBinHelper}}}{\textit{\textbf{i}}}$
                
                \If{$\text{allotedBin} = -1$}
                    \State \textbf{continue}
                \EndIf
                
                \State $localVal \gets \text{allotedBin}$
                \State $\text{set<int>} *copy1, *copy2, *tempCopy$
                
                \Repeat
                    \State $Copy1 \gets \text{binArray}[\text{allotedBin}]$
                    
                    \If{$\text{copy1} = \text{NULL}$}
                        \State $(*tempCopy).insert(i)$
                        \State $\text{copy2} \gets tempCopy$
                    \Else
                        \If{$i \in *copy1$}
                            \State \textbf{break}
                        \EndIf
                        
                        
                        \For{$\text{a} \in *copy1$}
                            \State $(*\text{copy2}).insert(\text{a})$
                        \EndFor
                        
                        \State $(*\text{copy2}).insert(i)$
                    \EndIf
                \Until{$\text{binArray}[\text{allotedBin}].\text{CAS}(\text{copy1}, \text{copy2})$}
                
                \State $temp1 \gets -1$
                
                \If{$\text{initialBin}[i].\text{CAS}(\text{temp1}, \text{localVal})$}
                    \State $\text{atomicFetchAdd}(\text{processedTxns}, 1)$
                \EndIf
            \Else
                \State $localCount \gets \text{localCount} + 1$
                
                \If{$\text{localCount} = \left|\text{\textit{Txns}}\right|$ \textbf{and} $\text{flag} = 0$}
                    \State $\text{flag} \gets 1$
                    \State $\text{atomicFetchAdd}(\text{threadCounter2}, 1)$
                \EndIf
            \EndIf
            
            \If{$(\text{threadCounter2} = \text{numThreads})$ \textbf{or} $(\text{localCount} = \left|\text{\textit{Txns}}\right|)$}
                \State $\text{atomicStore}(\text{processedTxns}, \left|\text{\textit{Txns}}\right|)$
            \EndIf
        \EndWhile

    \EndFunction
\end{algorithmic}
\end{multicols}
\end{algorithm}


\begin{algorithm}
\tiny
\caption{Calculate Bin Algorithm - Helper Threads}
\label{alg:calculate_bin_algo_helper}
\algrenewcommand\alglinenumber[1]{\tiny #1:}
\begin{algorithmic}[1]
    \Function{CalculateBinHelper}{$i$}
        \State $\text{currentBin} \gets -1$
        
        \If{$\left|\text{conflictArray}[i]\right| \neq \mathit{\emptyset}$}
            \For{$\text{conflictTxn} \in \text{conflictArray}[i]$}
                \If{$\text{initialBin}[\text{conflictTxn}] = -1$}
                    \State \textbf{return} $-1$
                \Else
                    \If{$\text{initialBin}[\text{conflictTxn}] > \text{currentBin}$}
                        \State $\text{currentBin} \gets \max(\text{currentBin}, \text{initialBin}[\text{conflictTxn}]$
                    \EndIf
                \EndIf
            \EndFor
        \EndIf
        
        \State $\text{currentBin} \gets \text{currentBin} + 1$
        \State \textbf{return} $\text{currentBin}$
    \EndFunction
\end{algorithmic}
\end{algorithm}


\begin{algorithm}
\tiny
\caption{Execute Transaction}
\begin{multicols}{2}
\algrenewcommand\alglinenumber[1]{\tiny #1:}
\label{alg:exec_transaction}
\begin{algorithmic}[1]
\Function{ExecuteTransaction}{$\text{currBin, currTrans}$}
    \If{$\text{glbptr} \geq 0$}
        \If{$\text{currBin} > \text{glbptr}$}
            \State \Return $-1$
        \EndIf
        \State $val \gets \text{totalTransBin}[\text{currBin}]$
        \If{$\text{currTrans} \geq \text{val}$}
            \State \Return $-1$
        \EndIf
        \If{$\text{currTrans} < \text{val}$}
            \State \Return $\text{binMatrix}[\text{currBin}][\text{currTrans}]$
        \Else
            \State \Return $-1$
        \EndIf
    \EndIf
    \State \Return $-1$
\EndFunction
\end{algorithmic}
\end{multicols}
\end{algorithm}

\subsection{Standard MBPS Framework}
\label{sec:standard}
The Standard MBPS framework incorporates barrier in its design that establishes a controlled synchronization point. This arrangement facilitates organized parallel execution of smart contracts. The framework's core revolves around two primary phases: conflict set identification and bin number assignment. These phases are distinctly separated by barriers, ensuring that upon the completion of conflict set identification by all threads, the subsequent phase of bin number assignment commences.

In the conflict set identification phase, transactions are dynamically allocated to sets based on their potential conflicts with other transactions. These conflicts, identified through analysis of input and output addresses, facilitate the detection of write-write, read-write, and write-read conflicts. This phase operates concurrently across multiple threads, optimizing processing resource utilization.

After the identification of conflict sets, the framework proceeds to assign bin numbers, whereby transactions are allocated to bins according to their respective conflict sets. This allocation ensures that conflicting transactions are not placed within the same bin, thereby enabling concurrent execution of transactions within each bin and enhancing system efficiency.

\subsection{Assisted MBPS Framework}
\label{sec:assisted}
The Assisted MBPS introduces a blockchain transaction execution framework that incorporates a barrier mechanism along with thread assistance to enhance efficiency. This variant enhances the standard MBPS framework by introducing helper threads to facilitate optimized execution, particularly when certain threads encounter crashing or latency issues.

Similar to the standard MBPS framework, this variant also comprises two distinct phases: conflict set identification and bin number assignment. However, these phases are augmented with the inclusion of helper threads. Upon completion of both phases, transactions can be executed in parallel, thereby further optimizing performance.

\subsection{Lockfree MBPS Frameowrk}
\label{sec:lockfree}
The Lockfree MBPS framework facilitates parallel transaction execution in a blockchain by leveraging lock-free data structures and multithreading, ensuring efficient processing without contention. Unlike previous variants, this framework adopts a lock-free approach, avoiding traditional locking mechanisms like mutexes or barriers for synchronization.

Similar to previous variants, this framework consists of two distinct phases. However, it distinguishes itself by utilizing helper threads and various atomic operations to achieve a lock-free approach, enabling efficient transaction allocation to bins without contention.

This framework employs a combination of Algorithms \ref{alg:conflict_set_idenification_helper}, \ref{alg:bin_num_assign_helper} and \ref{alg:calculate_bin_algo_helper} to identify transaction conflicts and allocate bin numbers to transactions. This implementation features an approach where each thread is responsible for claiming a transaction and executing the initial phase to detect conflicts. Upon completion of the first phase, threads seamlessly transition to the subsequent phase of assigning bins to transactions without necessitating synchronization through barriers or locks. Consequently, threads are not required to wait for the completion of other threads' first phase executions before proceeding to the second phase. This design ensures efficient and concurrent execution without the overhead associated with traditional synchronization mechanisms.

Overall, this framework significantly enhances blockchain system scalability and performance by enabling the parallel execution of transactions while maintaining data consistency and integrity.

\begin{lemma}
Given two transactions \( \text{txnA} \) and \( \text{txnB} \), where \( \text{txnA.writeList} \), \( \text{txnA.readList} \), \( \text{txnB.writeList} \), and \( \text{txnB.readList} \) represent the sets of data items read or written by each transaction, the function CheckConflicts(txnA, txnB) returns true if there is any overlap in their write sets or if there is a write-read or read-write conflict; otherwise, it returns false.
\end{lemma}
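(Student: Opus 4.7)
The plan is to prove the lemma as a direct correctness statement about Algorithm~\ref{alg:conflict_check}, establishing a biconditional: the function returns \textbf{true} if and only if at least one of the three stated overlap conditions holds. Since the algorithm is a straightforward sequence of conditional checks, the proof reduces to a case analysis on its control flow, together with the implicit understanding that the function returns \textbf{false} whenever none of the three \texttt{if}/\texttt{elif} guards is triggered.

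First I would argue the \emph{soundness} direction: assume the function returns \textbf{true} on inputs $\text{txnA}, \text{txnB}$. Then execution must have reached exactly one of lines~3, 5, or 7. In each of these three cases, the guarding intersection test must have evaluated to a non-empty set, which is precisely one of the three conflict conditions stated in the lemma (write--write, read--write, or write--read, respectively). Hence a conflict exists.

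Next I would argue the \emph{completeness} direction: suppose some overlap condition holds, i.e., $\text{txnA.writeList} \cap \text{txnB.writeList} \neq \emptyset$, or $\text{txnA.readList} \cap \text{txnB.writeList} \neq \emptyset$, or $\text{txnA.writeList} \cap \text{txnB.readList} \neq \emptyset$. A short case analysis shows that the first guard whose condition is satisfied (under the \texttt{if}/\texttt{elif} chain) will fire and cause the function to return \textbf{true}; note that because earlier branches also return \textbf{true}, the short-circuiting behaviour of the \texttt{elif} chain does not cause a missed detection.

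The only subtlety, rather than a genuine obstacle, is that Algorithm~\ref{alg:conflict_check} does not explicitly list a default \textbf{return false} at the end; I would note that standard convention for such predicate functions treats the fall-through as returning \textbf{false}, which is exactly what is needed to complete the ``only if'' direction. Combining the two directions yields the stated biconditional, and the lemma follows.
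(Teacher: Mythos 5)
Your proof is correct and follows essentially the same approach as the paper's: a case analysis matching the three conflict conditions (write--write, read--write, write--read) to the three branches of Algorithm~\ref{alg:conflict_check}. If anything, your version is slightly more careful than the paper's, since you explicitly separate the two directions of the biconditional and flag the missing default \textbf{return false}, which the paper's proof glosses over.
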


\begin{proof}
The correctness of the \textit{CheckConflicts} function is established by analyzing possible conflict scenarios between two transactions \( \text{txnA} \) and \( \text{txnB} \). A write-write conflict arises if there is at least one data item that both transactions write. Thus, the function should return true if the intersection of \( \text{txnA.writeList} \) and \( \text{txnB.writeList} \) is non-empty. Similarly, a read-write conflict occurs if \( \text{txnA} \) reads a data item that \( \text{txnB} \) writes, necessitating a true return if \( \text{txnA.readList} \) intersects with \( \text{txnB.writeList} \). Conversely, a write-read conflict occurs if \( \text{txnA} \) writes a data item that \( \text{txnB} \) reads, requiring a true return if \( \text{txnA.writeList} \) intersects with \( \text{txnB.readList} \). The function \textit{CheckConflicts} effectively covers all these scenarios, ensuring accurate detection of overlaps in write sets or write-read/read-write conflicts.
\end{proof}

\begin{lemma}
The function BinConflictSet iterates through transactions to identify conflicts and stores them in a conflictArray, accurately identifying conflicts between each transaction \( \text{txn}_i \) and all transactions \( \text{txn}_j \) where \( j < i \).
\end{lemma}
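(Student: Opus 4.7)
The plan is to prove this by direct inspection of the loop structure of \textit{BinConflictSet} and then invoking the previous lemma on \textit{CheckConflicts} for pointwise correctness. I would proceed in three steps corresponding to: (i) every transaction index is eventually visited, (ii) when visited, every strictly earlier index is examined, and (iii) the resulting set recorded in \textit{conflictArray}$[i]$ is exactly the set of conflicting predecessors.

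First I would argue that each index $i \in \{0, 1, \ldots, |\textit{Txns}|-1\}$ is claimed and processed exactly once. The outer \textbf{while} loop reads $i$ via \textit{atomicFetchAdd}$(i1, 1)$, which hands out distinct consecutive values across all participating threads. The loop exits only when the returned value is at least $|\textit{Txns}|$, so every index below the bound is assigned to some thread. The guard $\textit{conflictArray}[i] = \emptyset$ at the entry of the body is a defensive no-op in this non-helper variant, since distinct threads never receive the same $i$.

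Second, for a fixed claimed index $i$, the inner \textbf{for} loop iterates $j$ from $0$ to $i-1$ and invokes \textit{CheckConflicts}$(\text{txn}_i, \text{txn}_j)$ on each pair. By the previous lemma, this call returns true exactly when the pair conflicts according to the definition given in Section~\ref{sec:framework} (write-write, read-write, or write-read overlap on some data item). Each such $j$ is inserted into the thread-local set \textit{lowerConflicts}, so after the loop terminates \textit{lowerConflicts} equals $\{\, j < i : \text{txn}_i \text{ and } \text{txn}_j \text{ conflict}\,\}$. The subsequent assignment $\textit{conflictArray}[i] \gets \textit{lowerConflicts}$ then publishes this exact set into the shared structure.

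The main obstacle will be the concurrency argument, which needs to be made explicit rather than glossed over. I need to show that although multiple threads run \textit{BinConflictSet} simultaneously, the per-index reasoning is unaffected: the set \textit{lowerConflicts} is a local pointer so no race corrupts its construction; writes to distinct slots of \textit{conflictArray} do not interfere; and the unique-index property of \textit{atomicFetchAdd} prevents double-processing of any transaction. Once these observations are in place, correctness reduces to applying the previous lemma at each $(i, j)$ pair, and the conclusion of the lemma follows directly by taking the union of the per-index guarantees over all $i \in \{0, \ldots, |\textit{Txns}|-1\}$.
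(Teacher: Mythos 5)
Your proposal is correct and follows essentially the same route as the paper's own proof: walk the loop structure, invoke the correctness of \textit{CheckConflicts} on each pair \((i,j)\) with \(j<i\), and conclude that \textit{conflictArray}\([i]\) holds exactly the conflicting predecessors. You are in fact somewhat more careful than the paper, which narrates the algorithm as if it were sequential, whereas you explicitly justify the unique hand-out of indices via \textit{atomicFetchAdd}, the thread-locality of \textit{lowerConflicts}, and the non-interference of writes to distinct slots of \textit{conflictArray} — but this is added rigor within the same argument, not a different approach.
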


\begin{proof}
The \textit{BinConflictSet} function is designed to identify and record conflicts among transactions by evaluating each transaction \( \text{txn}_i \) against all preceding transactions \( \text{txn}_j \) (where \( j < i \)). Initially, \textit{conflictArray} is initialized with empty sets for each transaction. For each transaction \( \text{txn}_i \), an empty set named \textit{lowerConflicts} is initialized to collect conflicts. The function iterates through all previous transactions \( \text{txn}_j \) and uses `CheckConflicts` to determine if there are any conflicts between \( \text{txn}_i \) and \( \text{txn}_j \). If a conflict is detected, the index \( j \) is added to \textit{lowerConflicts}. After processing all preceding transactions, \textit{conflictArray[i]} is updated with the collected conflicts. This approach ensures that each transaction is compared with all earlier ones, and conflicts are accurately recorded.
\end{proof}

\begin{lemma}
The function BinConflictSetHelper identifies conflicts similarly to BinConflictSet, but uses helper threads for concurrent processing.
\end{lemma}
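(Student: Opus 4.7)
The plan is to reduce the correctness of \textit{BinConflictSetHelper} to Lemma 2 by showing that upon termination the contents of \textit{conflictArray} coincide with those produced by \textit{BinConflictSet}, and that the helper-thread machinery neither corrupts any entry nor terminates the loop prematurely. I would organize this into three steps: per-index correctness, safety of concurrent updates, and termination.

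First, I would observe that inside the branch where a thread finds \textit{conflictArray}[i] empty, the inner computation is syntactically identical to that of \textit{BinConflictSet}: the thread iterates $j$ from $0$ to $i-1$, invokes \textit{CheckConflicts}, and accumulates conflicting indices into a local set \textit{lowerConflicts}. Since this inner body reads only the immutable \textit{readList} and \textit{writeList} fields, every thread that processes index $i$ computes exactly the same \textit{lowerConflicts} set. Invoking Lemma 2 on this inner body then shows that the set each competing thread attempts to install is precisely the set required by the specification.

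Second, I would argue that the compare-and-swap on \textit{conflictArray}[i] guarantees at most one successful installation per slot. Because all competing threads produce the identical correct set, whichever CAS wins leaves the intended value in the array; losers observe \textit{conflictArray}[i] as non-empty on subsequent iterations and fall through to the else branch. The atomic increment of \textit{conflictTxns}, guarded by the CAS success, then ensures the counter grows exactly once per filled slot and never overcounts.

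The third step, termination, is the main obstacle. The natural exit condition \textit{conflictTxns} $=$ $|\mathit{Txns}|$ is complemented by two short-circuit conditions: the early \textit{atomicStore} fires when either \textit{threadCounter1} equals \textit{numThreads} or a single thread's \textit{localCount} has reached $|\mathit{Txns}|$. I would need to show these short-circuits are safe, i.e., that no empty slot can remain at the moment they trigger. The key invariant I would establish is that a thread only sets its \textit{flag} and increments \textit{threadCounter1} after \textit{localCount} has grown to $|\mathit{Txns}|$ through a contiguous run of observations that each slot it visited was already filled, and immediately decrements \textit{threadCounter1} (and resets \textit{localCount}) the moment it encounters any empty slot. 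Under sequentially consistent atomics this means \textit{threadCounter1} $=$ \textit{numThreads} witnesses that every slot has been observed non-empty by some thread; combined with the fact that filled slots are never unset, the array must then be globally complete, so the short-circuit produces the same final \textit{conflictArray} as \textit{BinConflictSet} would.
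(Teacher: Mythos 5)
Your first two steps follow essentially the same route as the paper's proof --- reduce the per-index computation to the sequential case of Lemma~2 and argue that the CAS on \textit{conflictArray}$[i]$ makes the race benign --- but you make explicit a point the paper glosses over, namely that all threads competing on index $i$ compute the \emph{identical} set (since \textit{readList}/\textit{writeList} are immutable), so it does not matter which CAS wins. That is a genuine improvement in precision. The larger difference is your third step: the paper's proof says nothing at all about termination or about the \textit{threadCounter1}/\textit{localCount} short-circuits, whereas you correctly identify that the early \textit{atomicStore} is the only part of the lemma whose correctness is non-obvious.

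That said, your termination argument as stated only covers one of the two short-circuits. The invariant you propose (every thread holds \textit{flag}$=1$ only while its current run of filled observations has length at least $|\mathit{Txns}|$) does justify the \textit{threadCounter1}$=$\textit{numThreads} exit, because the most recent $|\mathit{Txns}|$ \emph{global} fetches of the shared counter are consecutive values of $i1$ and hence cover every residue mod $|\mathit{Txns}|$, and each of them lies inside some thread's filled run. But it does not justify the standalone \textit{localCount}$=|\mathit{Txns}|$ exit: a single thread's $|\mathit{Txns}|$ fetches are an interleaved \emph{subsequence} of the global counter sequence, so they need not visit every index, and an index still being computed by a slower thread can be skipped entirely. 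You would need a separate argument there (e.g., that any thread holding an unfilled index completes its CAS before returning, so the array is complete by the time the phase ends), or you should restrict the claim to the first condition. You should also note that the pseudocode never resets \textit{flag} to $0$ after decrementing \textit{threadCounter1}, which breaks the invariant as literally stated; your proof implicitly assumes the intended behaviour. None of this is addressed in the paper's proof, so your proposal is strictly more ambitious --- it just leaves the hardest sub-case as a plan rather than a completed argument.
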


\begin{proof}
The \textit{BinConflictSetHelper} function operates similarly to \textit{BinConflictSet} but incorporates concurrent processing to enhance efficiency. Multiple threads are employed, each responsible for processing different transactions. Each thread retrieves its transaction index \( i \) using atomic fetch-and-add operations to ensure unique assignment. For the assigned transaction \( i \), each thread performs conflict detection by iterating through previous transactions \( j \) (where \( j < i \)) and utilizing \textit{CheckConflicts}. If conflicts are identified, they are added to \textit{lowerConflicts}. The function then updates \textit{conflictArray[i]} using compare-and-swap (CAS) operations to manage concurrent updates. Atomic operations ensure that updates to \textit{conflictArray} are accurately performed even with multiple threads involved, allowing \textit{BinConflictSetHelper} to correctly identify conflicts while benefiting from concurrent processing.
\end{proof}

\begin{lemma}
The function BinNumAssign assigns bin numbers to transactions based on their conflict sets without using helper threads.
\end{lemma}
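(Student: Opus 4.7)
The plan is to establish correctness in three layers: (i) the inner helper \textsc{CalculateBin} returns the right bin for any transaction whose conflicting predecessors have already been assigned, (ii) the spin-wait in \textsc{CalculateBin} always terminates, so every transaction eventually obtains a bin, and (iii) the CAS-based insertion into \textit{binArray} and the subsequent write to \textit{initialBin}[i] produce a consistent global assignment in which conflicting transactions never share a bin.

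First, I would argue local correctness of \textsc{CalculateBin}$(i)$. By the previous lemma, \textit{conflictArray}$[i]$ contains exactly the indices $j<i$ with which transaction $i$ conflicts. The routine scans this set, waits until \textit{initialBin}$[j]\neq-1$ for each such $j$, takes the maximum of those values, and returns that maximum plus one. I would then prove by strong induction on $i$ that this number equals the smallest bin index that is strictly greater than the bin of every conflicting predecessor; this is exactly the compact assignment rule.

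Next, I would address liveness of the spin-wait, which I expect to be the main obstacle. Because transactions are claimed through $\textsc{atomicFetchAdd}(i_2,1)$ starting from $0$, every index $0,1,\dots,|\textit{Txns}|-1$ is eventually handed to some thread. For a transaction $i$ being processed, each $j\in\textit{conflictArray}[i]$ satisfies $j<i$, so $j$ was claimed no later than $i$; thus some thread is (or has been) executing the \textsc{BinNumAssign} body for $j$, and by induction on $j$ will set \textit{initialBin}$[j]$ in finite time. Hence the spin over \textit{conflictTxn} terminates, and the enclosing \textsc{while} loop of \textsc{BinNumAssign} makes progress until the global counter $i_2$ exceeds $|\textit{Txns}|$. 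I also need to rule out cyclic waiting, which follows immediately from the fact that the conflict-wait relation respects the strict order $j<i$.

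Finally, I would verify the global invariant that no two conflicting transactions land in the same bin. Suppose $i$ conflicts with $j$ with $j<i$. By construction, $j\in\textit{conflictArray}[i]$, and when \textsc{CalculateBin}$(i)$ returns, \textit{initialBin}$[j]$ has already been set to some value $b_j$; the returned bin for $i$ is at least $b_j+1>b_j$. Because the final write \textit{initialBin}$[i]\gets\mathit{localVal}$ in \textsc{BinNumAssign} stores precisely this returned value, $i$ and $j$ end up in distinct bins. The CAS-retry loop updating \textit{binArray}[\textit{allotedBin}] guarantees that $i$ is atomically inserted into the set held at that slot without losing any concurrently-inserted entry, since each iteration re-reads the current pointer, copies its contents, appends $i$, and only succeeds if the pointer is unchanged; the early \textbf{break} on $i\in *\textit{copy1}$ prevents duplicate insertions under CAS retries. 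Combining these three layers yields both safety (a well-defined, conflict-respecting bin assignment) and progress (termination for every transaction), which is what the lemma asserts.
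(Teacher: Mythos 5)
Your proposal is correct, and it is substantially more rigorous than the paper's own proof, which is essentially a prose restatement of what \textsc{BinNumAssign} does (compute \textit{allotedBin} via \textsc{CalculateBin}, retry the CAS until it succeeds, then record \textit{initialBin}$[i]$) followed by an assertion of correctness. Your three-layer decomposition adds two arguments the paper leaves entirely implicit. First, the liveness of the spin-wait in \textsc{CalculateBin}: your observation that the wait relation only ever points from $i$ to some $j<i$, combined with the fact that $\textsc{atomicFetchAdd}(i_2,1)$ dispenses indices in increasing order so every such $j$ has already been claimed by some thread, gives a well-founded induction that rules out deadlock and cyclic waiting — this is the genuinely nontrivial step, and the paper does not address it at all. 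Second, the explicit safety invariant that conflicting transactions receive distinct bins (since $i$'s bin is at least $\textit{initialBin}[j]+1$ for every $j\in\textit{conflictArray}[i]$), which the paper asserts only in the surrounding prose of the framework description, not in the proof. Your analysis of the read-copy-CAS retry loop matches the paper's one-sentence claim that "the update operation is retried until it succeeds." The only caveat worth recording is that your liveness argument presumes every claimed transaction is eventually completed by its thread (no crashes or unbounded stalls), which is consistent with the paper's framing since the non-helper variant is explicitly not crash-tolerant.
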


\begin{proof}
The \textit{BinNumAssign} function assigns bin numbers to transactions based on their identified conflict sets without employing concurrent processing. For each transaction \( i \), the bin number is computed using the \textit{CalculateBin} function. The computed bin number, referred to as \textit{allotedBin}, is used to assign the transaction to a bin. The function employs atomic operations to update \textit{binArray} with the transaction index \( i \) at the \textit{allotedBin}. The update operation is retried until it succeeds, ensuring that the assignment is made correctly without conflicts. By performing bin assignments serially and utilizing atomic operations, \textit{BinNumAssign} guarantees that each transaction is correctly assigned to the appropriate bin based on its conflict set.
\end{proof}

\begin{lemma}
The function BinNumAssignHelper assigns bin numbers to transactions using helper threads for concurrent processing.
\end{lemma}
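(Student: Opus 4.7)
The plan is to mirror the structure of the proof for \textit{BinNumAssign} while explicitly handling the two additional complications introduced by concurrency: (i) a transaction's bin cannot be finalized until all its conflicting predecessors have been assigned, which is now handled by \textit{CalculateBinHelper} returning $-1$, and (ii) multiple helper threads may simultaneously attempt to update \textit{binArray} and \textit{initialBin}. I would therefore split the argument into a safety part (every assigned bin number is correct) and a progress/termination part (every transaction is eventually assigned).

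For safety, I would first invoke Lemma~3 to assert that \textit{conflictArray} is correctly populated before bin assignment begins (or, in the lock-free variant, that any entry read is either empty or equal to the true conflict set). Then, for each transaction $i$, the helper thread calls \textit{CalculateBinHelper}$(i)$, which iterates over \textit{conflictArray}$[i]$ and returns $-1$ if any conflicting predecessor $c$ still has \textit{initialBin}$[c]=-1$; otherwise it returns $\max_{c \in \text{conflictArray}[i]}\text{initialBin}[c]+1$. I would argue by induction on the bin number: a transaction is assigned bin $k$ only if every conflicting predecessor has already been assigned some bin $\le k-1$, so conflicting transactions never land in the same bin. The retry loop guarded by \texttt{CAS} on \textit{binArray}[\textit{allotedBin}] ensures that concurrent insertions into the same bin compose linearizably, and the final \texttt{CAS} on \textit{initialBin}$[i]$ from $-1$ to \textit{localVal} guarantees that exactly one thread's computation is committed for each transaction, so \textit{processedTxns} is incremented exactly once per transaction.

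For progress and termination, I would track the pair $(\text{\textit{localCount}}, \text{\textit{flag}})$ that each thread maintains. A thread whose probed slot is already assigned increments \textit{localCount}; once \textit{localCount} reaches $|\text{\textit{Txns}}|$ the thread sets $\text{\textit{flag}}=1$ and increments \textit{threadCounter2}, indicating that it has completed a full sweep without finding work. I would then show that whenever a thread successfully assigns a bin it resets its \textit{localCount} and decrements \textit{threadCounter2} (if its flag was set), so \textit{threadCounter2} equals \textit{numThreads} precisely when every thread has completed a full sweep without any unassigned transaction remaining, at which point \textit{processedTxns} is atomically set to $|\text{\textit{Txns}}|$ and the loop exits. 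Combined with the fact that the dependency DAG induced by \textit{conflictArray} is acyclic (since conflicts are only recorded against strictly earlier indices, per Lemma~2), this guarantees that at any point at least one unassigned transaction has all its dependencies resolved, so some thread will make progress and no deadlock is possible.

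The main obstacle I expect is the interaction between the modular index probing $i \gets \text{atomicFetchAdd}(i2,1) \bmod |\text{\textit{Txns}}|$ and the termination condition: I must rule out the scenario where a thread observes \textit{initialBin}$[i]\neq -1$ for every $i$ in a sweep while another thread is mid-way through its \texttt{CAS} on \textit{initialBin}, causing a premature exit. The remedy is to note that a thread only counts a slot toward \textit{localCount} when the read of \textit{initialBin}$[i]$ actually returns a non-$-1$ value, which by the safety argument means the \texttt{CAS} succeeded and \textit{processedTxns} was already incremented; hence the global termination predicate $\text{\textit{processedTxns}}=|\text{\textit{Txns}}|$ is reached before or at the same instant that any thread's \textit{localCount} sweep completes, closing the window for incorrect early termination.
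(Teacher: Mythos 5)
Your proposal takes a genuinely different and far more ambitious route than the paper's proof, which is essentially a prose walkthrough of the algorithm's steps (fetch-and-add for index acquisition, \textit{CalculateBinHelper}, the \texttt{CAS} retry on \textit{binArray}, the single-writer \texttt{CAS} on \textit{initialBin}) with no explicit safety or liveness argument. Your decomposition into a safety invariant (induction on bin numbers, using the immutability of \textit{initialBin} entries after their one successful \texttt{CAS} and the acyclicity of the conflict relation from Lemma~2) and a separate progress/termination argument is sound in its first part and strictly stronger than what the paper establishes; the observation that all non-$(-1)$ computations of \textit{allotedBin} for a given transaction agree, so duplicate \textit{binArray} insertions are harmless, is implicit in your argument and worth making explicit.

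However, the closing step of your termination argument has a genuine gap. You claim that because a slot is counted toward \textit{localCount} only when \textit{initialBin}$[i]$ reads non-$(-1)$, the predicate $\textit{processedTxns}=|\textit{Txns}|$ must hold by the time any thread's sweep completes. This does not follow: the index $i$ is drawn from a \emph{shared} counter via $\text{atomicFetchAdd}(i2,1) \bmod |\textit{Txns}|$, so the $|\textit{Txns}|$ consecutive probes made by one thread are not $|\textit{Txns}|$ distinct indices --- other threads consume intermediate values of $i2$, and a single thread can observe the same already-assigned index repeatedly. Hence a thread can reach $\textit{localCount}=|\textit{Txns}|$ (and trigger $\text{atomicStore}(\textit{processedTxns},|\textit{Txns}|)$ via the second disjunct of the exit test) while some transaction remains unassigned. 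The same concern applies to the \textit{threadCounter2} $=$ \textit{numThreads} disjunct, since a thread that finds new work decrements \textit{threadCounter2} but never resets its \textit{flag}, so it cannot re-increment the counter on a later idle sweep. Your safety argument survives intact, but the claim that no premature exit is possible needs either a stronger invariant (e.g., relating $\textit{localCount}$ to a full residue class of probed indices) or an honest acknowledgment that termination correctness rests on the algorithm being repaired; the paper sidesteps this entirely by not proving termination at all.
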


\begin{proof}
The \textit{BinNumAssignHelper} function assigns bin numbers to transactions similarly to \textit{BinNumAssign} but leverages concurrent processing through helper threads for improved performance. Multiple threads are utilized, each processing different transactions. Each thread obtains a transaction index \( i \) through atomic fetch-and-add operations to ensure unique processing. For each transaction \( i \), if the initial bin assignment is uninitialized (indicated by \( \text{initialBin}[i] = -1 \)), the thread calculates the bin number using \textit{CalculateBinHelper}. The thread then attempts to update \textit{binArray} with the transaction index \( i \) at the calculated bin number using CAS operations. This process is repeated until the update is successful, accommodating concurrent thread operations. By using atomic operations and helper threads, \textit{BinNumAssignHelper} efficiently and correctly assigns bin numbers to transactions while managing concurrent updates effectively.
\end{proof}

\section{Analysis of Experiments}
\label{sec:experiments}
In this section, we present a comprehensive analysis of the experiments conducted to evaluate the performance of our framework integrated with the Hyperledger Sawtooth \cite{sawtooth:url} blockchain. We chose Hyperledger Sawtooth as the testing platform due to its robust support for parallelism and the presence of an inbuilt parallel scheduler. We made modifications to the \textit{scheduler\_parallel.py} file, the parallel scheduler module of Hyperledger Sawtooth v1.2.6. Although the Sawtooth framework is developed in Python, we developed our multi-threaded MBPS framework in C++. We selected C++ for its provision of low-level control over parallelism through features such as threads, mutexes, condition variables, and atomic operations provided by the libraries.

We conducted the experiments on a machine featuring an x86\_64 architecture with 56 CPUs, 2 threads per core, and 14 cores per socket (Intel Xeon CPU E5-2690 v4 @ 2.60GHz). We compared the performance of three versions of our MBPS framework with the inbuilt parallel tree scheduler of Sawtooth and the serial scheduler of Sawtooth. Additionally, we compared the results with the \textit{ADJ\_DAG} and \textit{LL\_DAG} frameworks \cite{10.1007/978-3-031-39698-4_13}.

We conducted three distinct types of experiments to assess different aspects of the framework's performance under varying conditions: Baseline Performance Evaluation, Threads Latency Impact Analysis, and Threads Crash Resilience Analysis.
We have employed three distinct conflict parameters (CP) outlined in paper \cite{10.1007/978-3-031-39698-4_13} to assess the performance of the experiments discussed above. The first parameter, CP1, indicates the percentage of transactions that contain at least one dependency. The second parameter, CP2, represents the percentage of dependent transactions in relation to the total number of transactions. Lastly, CP3 measures the percentage of disjoint transactions relative to the total number of transactions.

\vspace{-0.1cm}

\subsection{Baseline Performance Evaluation}
This experiment serves as a benchmark, providing a standard for expected performance without additional factors such as delays or crashes. We conducted this experiment across all frameworks to facilitate comparative analysis. \grfref{baseline-graphs} illustrates the performance comparison of all frameworks.

In \grfref{baseline-graphs}(a), we varied the number of transactions and observed the execution times. We noted that when the number of transactions was low, all frameworks exhibited similar performance. However, as the number of transactions increases, the performance of the serial and tree schedulers deteriorates, while the remaining frameworks maintain relatively comparable execution time.

In \grfref{baseline-graphs}(b) and \grfref{baseline-graphs}(c), we recorded the throughput of all frameworks while varying the number of transactions and dependency percentage, respectively. Similar to the observations in Graph 1a, the results indicated that serial and tree schedulers deviated from the norm. Conversely, the performance of the other frameworks remained comparable across the board.
\begin{center}
\begin{Graph}[!ht]
\begin{minipage}{0.33\textwidth}
\vspace{0.85cm}
\pgfplotsset{width=4.5cm,compat=1.9}
\centering
\begin{tikzpicture}
\label{graph:baseline1}
\begin{axis}[
    title = {(a) CP1},
    title style={at={(0.5,-0.5)},anchor=north}, 
    xlabel={\tiny No. of Transactions},
    ylabel={\tiny Execution Time [seconds]},
    xtick={200,400,600,800,1000,1200},
    xticklabel style={font=\tiny, /pgf/number format/1000 sep= , rotate=15},
    yticklabel style={font=\tiny, /pgf/number format/1000 sep= },
    y label style={yshift=-5pt},
    legend style={at={(0.5,1.2)},
    anchor=north,legend columns=-1},
    ymajorgrids=true,
    grid style=dashed,
]

    \addplot[
    color=orange,
    mark=square*,
    ]
    coordinates {
    (200, 59.4880723953247)
    (400, 133.108985424041)
    (600, 208.619644641876)
    (800, 287.83289194107)
    (1000, 414.000573158264)
    (1200, 496.014652252197)
    };

    \addplot[
    color=violet,
    mark=star,
    ]
    coordinates {
    (200, 55.5977654457092)
    (400, 113.712277412414)
    (600, 175.368566513061)
    (800, 242.035264968872)
    (1000, 394.304020404815)
    (1200, 477.355322837829)
    };

    \addplot[
    color=teal,
    mark=o,
    ]
    coordinates {
    (200, 51.0700440406799)
    (400, 100.923318862915)
    (600, 165.826790332794)
    (800, 230.152275562286)
    (1000, 350.475747585296)
    (1200, 426.033885478973)
    };

    \addplot[
    color=blue,
    mark=*,
    ]
    coordinates {
    (200, 53.5197877883911)
    (400, 106.391243934631)
    (600, 165.559518337249)
    (800, 222.098581790924)
    (1000, 354.323296546936)
    (1200, 432.55479812622)
    };

    \addplot[
    color=brown,
    mark=triangle,
    ]
    coordinates {
    (200, 51.0003218650817)
    (400, 104.1633374691)
    (600, 160.447205781936)
    (800, 225.73484134674)
    (1000, 337.504456043243)
    (1200, 421.768282175064)
    };

    \addplot[
    color=green,
    mark=Mercedes star flipped,
    ]
    coordinates {
    (200, 51.0022809505462)
    (400, 105.822781801223)
    (600, 164.263003587722)
    (800, 225.731611728668)
    (1000, 348.164312839508)
    (1200, 432.820511817932)
    };

    \addplot[
    color=red,
    mark=square,
    ]
    coordinates {
    (200, 50.2075493335723)
    (400, 103.8509683609)
    (600, 162.928800582885)
    (800, 225.915188074111)
    (1000, 349.940569639205)
    (1200, 423.27399778366)
    };

\end{axis}
\end{tikzpicture}
\end{minipage}%
\hfill
\begin{minipage}{0.33\textwidth}
\pgfplotsset{width=4.5cm,compat=1.9}
\begin{tikzpicture}
\label{graph:baseline2}
\begin{axis}[
    title = {(b) CP2},
    title style={at={(0.5,-0.5)},anchor=north}, 
    xlabel={\tiny No. of Transactions},
    ylabel={\tiny Throughput [txns/s]},
    xtick={200,400,600,800,1000,1200},
    xticklabel style={font=\tiny, /pgf/number format/1000 sep= , rotate=15},
    yticklabel style={font=\tiny, /pgf/number format/1000 sep= },
    y label style={yshift=-5pt}, 
    legend style={at={(0.5,1.5)},
    anchor=north,legend columns=-1},
    ymajorgrids=true,
    grid style=dashed,
]

    \addplot[
    color=orange,
    mark=square*,
    ]
    coordinates {
    (200, 54.6069589663134)
    (400, 52.9997514158226)
    (600, 52.180634879544)
    (800, 52.0664565125153)
    (1000, 48.9204594653505)
    (1200, 47.6709517483711)
    };

    \addplot[
    color=violet,
    mark=star,
    ]
    coordinates {
    (200, 62.0806171023257)
    (400, 58.3289379461728)
    (600, 57.1279992579621)
    (800, 54.8600141011671)
    (1000, 51.7899572820621)
    (1200, 50.0298782804865)
    };

    \addplot[
    color=teal,
    mark=o,
    ]
    coordinates {
    (200, 62.3126494052537)
    (400, 60.4219683842352)
    (600, 59.4772999253517)
    (800, 59.0595125174713)
    (1000, 57.5875740679185)
    (1200, 58.0835748306135)
    };

    \addplot[
    color=blue,
    mark=*,
    ]
    coordinates {
    (200, 61.3359467708669)
    (400, 60.4806493253306)
    (600, 59.3849992103105)
    (800, 56.8526817201439)
    (1000, 56.7539512277498)
    (1200, 55.5454929012194)
    };

    \addplot[
    color=brown,
    mark=triangle,
    ]
    coordinates {
    (200, 63.7069336211917)
    (400, 60.9941457498684)
    (600, 60.8652791557811)
    (800, 57.829607127108)
    (1000, 56.9780261761175)
    (1200, 55.7457897565487)
    };

    \addplot[
    color=green,
    mark=Mercedes star flipped,
    ]
    coordinates {
    (200, 63.5094431897819)
    (400, 62.0813118875981)
    (600, 60.270438240771)
    (800, 58.8789908743268)
    (1000, 56.4620785222196)
    (1200, 56.5841572271394)
    };

    \addplot[
    color=red,
    mark=square,
    ]
    coordinates {
    (200, 62.6447261425672)
    (400, 61.2778774890498)
    (600, 59.3982740925502)
    (800, 58.4077989383728)
    (1000, 57.3064647077269)
    (1200, 56.1717480864563)
    };

    \legend{\tiny Serial, \tiny Tree, \tiny Adj\_DAG, \tiny LL\_DAG, \tiny Standard, \tiny Assisted, \tiny Lockfree}{\hspace{-3.4 cm}}

\end{axis}
\end{tikzpicture}
\end{minipage}%
\hfill
\begin{minipage}{0.33\textwidth}
\vspace{1.1cm}
\pgfplotsset{width=4.5cm,compat=1.9}
\begin{tikzpicture}
\label{graph:baseline3}
\begin{axis}[
    title={(c) CP3},
    title style={at={(0.5,-0.5)},anchor=north}, 
    xlabel={\tiny Dependency Percentage},
    ylabel={\tiny Throughput [txns/s]},
    xtick={0,20,40,60,80,100},
    xticklabel style={font=\tiny, /pgf/number format/1000 sep= },
    yticklabel style={font=\tiny, /pgf/number format/1000 sep= },
    y label style={yshift=-4pt},
    legend style={at={(0.5,1.2)},
    anchor=north,legend columns=-1},
    ymajorgrids=true,
    grid style=dashed,
]

    \addplot[
    color=orange,
    mark=square*,
    ]
    coordinates {
    (0, 48.2202880881057)
    (20, 50.9503837212725)
    (40, 49.50529365798)
    (60, 46.8068635962329)
    (80, 45.5494176474553)
    (100, 42.7642844635524)
    };

    \addplot[
    color=violet,
    mark=star,
    ]
    coordinates {
    (0, 64.5281673227953)
    (20, 54.8983304290748)
    (40, 51.8372937295255)
    (60, 49.0843911404695)
    (80, 50.0282721551632)
    (100, 49.7503587599128)
    };

    \addplot[
    color=teal,
    mark=o,
    ]
    coordinates {
    (0, 67.6022277622984)
    (20, 58.6849001082319)
    (40, 56.22950399466)
    (60, 55.997255932809)
    (80, 55.6716115243087)
    (100, 56.2492810514448)
    };

    \addplot[
    color=blue,
    mark=*,
    ]
    coordinates {
    (0, 70.9334965576199)
    (20, 57.0684833937445)
    (40, 56.8295611641)
    (60, 56.5143114073968)
    (80, 55.4538662858671)
    (100, 58.3527317641228)
    };

    \addplot[
    color=brown,
    mark=triangle,
    ]
    coordinates {
    (0, 70.8131158799343)
    (20, 58.4111276129311)
    (40, 58.222314089468)
    (60, 57.2976007386501)
    (80, 58.1614327245399)
    (100, 57.6099347557158)
    };

    \addplot[
    color=green,
    mark=Mercedes star flipped,
    ]
    coordinates {
    (0, 68.5099223860074)
    (20, 57.6163767623198)
    (40, 56.6601092116857)
    (60, 58.7870675081018)
    (80, 58.0570896163345)
    (100, 58.2893359205743)
    };

    \addplot[
    color=red,
    mark=square,
    ]
    coordinates {
    (0, 66.7233738709988)
    (20, 56.8693657300459)
    (40, 56.6234569877998)
    (60, 56.9067174683176)
    (80, 56.7904835932267)
    (100, 56.6551592313122)
    };

\end{axis}
\end{tikzpicture}
\end{minipage}%

\vspace{-0.2cm}
\caption{Simple Wallet Smart Contracts - Baseline Performance Analysis}
\label{graph:baseline-graphs}
\end{Graph}
\end{center}

\vspace{-1cm}

\subsection{Threads Latency Impact Analysis}
This experiment focuses on evaluating the impact of introducing delays in threads on performance. We conducted the experiments for both the MBPS and DAG frameworks to compare results and understand how performance is affected under such conditions. \grfref{delay-graphs} presents the performance comparison of the different frameworks.

In \grfref{delay-graphs}(a) and \grfref{delay-graphs}(c), while keeping 600 transactions fixed, we intentionally delayed certain threads to assess the performance of the frameworks. As we increased the percentage of delayed threads, we observed that Lockfree MBPS frameworks began to outperform others, followed by the Assisted MBPS framework. In \grfref{delay-graphs}(b), we increased the number of transactions while maintaining one-third of the threads delayed across all cases. Here as well, we found that Lockfree and Assisted MBPS performed better due to the presence of helper threads.
\begin{center}
\begin{Graph}[!ht]
\begin{minipage}{0.33\textwidth}
\vspace{0.85cm}
\pgfplotsset{width=4.5cm,compat=1.9}
\centering
\begin{tikzpicture}
\label{graph:delay1}
\begin{axis}[
    title = {(a) 600 txns, CP1},
    title style={at={(0.5,-0.5)},anchor=north}, 
    xlabel={\tiny Delayed Threads [\%]},
    ylabel={\tiny Execution Time [seconds]},
    xtick={0,20,40,60,80,100},
    xticklabel style={font=\tiny, /pgf/number format/1000 sep= },
    yticklabel style={font=\tiny, /pgf/number format/1000 sep= },
    y label style={yshift=-5pt},
    legend style={at={(0.5,1.2)},
    anchor=north,legend columns=-1},
    ymajorgrids=true,
    grid style=dashed,
]

    \addplot[
    color=teal,
    mark=o,
    ]
    coordinates {
    (0, 168.057718276977)
    (20, 502.660870552063)
    (40, 521.651227474212)
    (60, 535.395941734314)
    (80, 545.32303571701)
    };

    \addplot[
    color=blue,
    mark=*,
    ]
    coordinates {
    (0, 166.712996482849)
    (20, 277.883512973785)
    (40, 352.251249790191)
    (60, 418.476637363433)
    (80, 525.300290584564)
    };

    \addplot[
    color=brown,
    mark=triangle,
    ]
    coordinates {
    (0, 165.612473487854)
    (20, 316.410882472991)
    (40, 469.641005992889)
    (60, 618.153278827667)
    (80, 775.592958927154)
    };

    \addplot[
    color=green,
    mark=Mercedes star flipped,
    ]
    coordinates {
    (0, 167.280564308166)
    (20, 267.520802021026)
    (40, 346.495583057403)
    (60, 401.292874813079)
    (80, 506.148881912231)
    };

    \addplot[
    color=red,
    mark=square,
    ]
    coordinates {
    (0, 164.539890289306)
    (20, 261.311967372894)
    (40, 340.995609760284)
    (60, 403.686907768249)
    (80, 490.216394901275)
    };


\end{axis}
\end{tikzpicture}
\end{minipage}%
\hfill
\begin{minipage}{0.33\textwidth}
\pgfplotsset{width=4.5cm,compat=1.9}
\begin{tikzpicture}
\label{graph:delay2}
\begin{axis}[
    title = {(b) 33\% delay ratio, CP2},
    title style={at={(0.5,-0.5)},anchor=north}, 
    xlabel={\tiny No. of Transactions},
    ylabel={\tiny Execution Time [seconds]},
    xtick={200,400,600,800,1000,1200},
    xticklabel style={font=\tiny, /pgf/number format/1000 sep= , rotate=15},
    yticklabel style={font=\tiny, /pgf/number format/1000 sep= },
    y label style={yshift=-6pt}, 
    legend style={at={(0.5,1.5)},
    anchor=north,legend columns=-1},
    ymajorgrids=true,
    grid style=dashed,
]

    \addplot[
    color=teal,
    mark=o,
    ]
    coordinates {
    (200, 190.068962574005)
    (400, 371.348848342895)
    (600, 552.187280654907)
    (800, 740.154435634613)
    (1000, 946.777036190033)
    (1200, 1134.57604646682)
    };

    \addplot[
    color=blue,
    mark=*,
    ]
    coordinates {
    (200, 143.5405779)
    (400, 269.7447515)
    (600, 381.1887455)
    (800, 495.95155)
    (1000, 629.0836506)
    (1200, 772.7933121)
    };

    \addplot[
    color=brown,
    mark=triangle,
    ]
    coordinates {
    (200, 152.806134223937)
    (400, 309.697196483612)
    (600, 466.570508480072)
    (800, 612.388832569122)
    (1000, 765.667107105255)
    (1200, 941.159372329711)
    };

    \addplot[
    color=green,
    mark=Mercedes star flipped,
    ]
    coordinates {
    (200, 141.497933864593)
    (400, 252.167470455169)
    (600, 360.960683822631)
    (800, 476.924703121185)
    (1000, 581.791539192199)
    (1200, 712.028737068176)
    };

    \addplot[
    color=red,
    mark=square,
    ]
    coordinates {
    (200, 132.920138835906)
    (400, 240.765206813812)
    (600, 351.821790218353)
    (800, 464.786880016326)
    (1000, 577.140250205993)
    (1200, 703.353397846221)
    };

    \legend{\tiny Adj\_DAG, \tiny LL\_DAG, \tiny Standard, \tiny Assisted, \tiny Lockfree}{\hspace{-1.95cm}}

\end{axis}
\end{tikzpicture}
\end{minipage}%
\hfill
\begin{minipage}{0.33\textwidth}
\vspace{1.1cm}
\pgfplotsset{width=4.5cm,compat=1.9}
\begin{tikzpicture}
\label{graph:delay3}
\begin{axis}[
    title={(c) 600 txns, CP3},
    title style={at={(0.5,-0.5)},anchor=north}, 
    xlabel={\tiny Delayed Threads [\%]},
    ylabel={\tiny Throughput [txns/s]},
    xtick={0,20,40,60,80,100},
    xticklabel style={font=\tiny, /pgf/number format/1000 sep= },
    yticklabel style={font=\tiny, /pgf/number format/1000 sep= },
    y label style={yshift=-4pt},
    legend style={at={(0.5,1.2)},
    anchor=north,legend columns=-1},
    ymajorgrids=true,
    grid style=dashed,
]

    \addplot[
    color=teal,
    mark=o,
    ]
    coordinates {
    (0, 58.273911924894)
    (20, 22.454474196583)
    (40, 20.9327577658072)
    (60, 19.1028037220802)
    (80, 18.0275915706472)
    };

    \addplot[
    color=blue,
    mark=*,
    ]
    coordinates {
    (0, 61.6877623612721)
    (20, 39.337698030942)
    (40, 31.6522957287906)
    (60, 25.3394824983952)
    (80, 21.1260522433104)
    };

    \addplot[
    color=brown,
    mark=triangle,
    ]
    coordinates {
    (0, 60.3798276653232)
    (20, 33.3156169953172)
    (40, 23.5484575362366)
    (60, 17.7904073319895)
    (80, 14.5583021821032)
    };

    \addplot[
    color=green,
    mark=Mercedes star flipped,
    ]
    coordinates {
    (0, 58.2264905082823)
    (20, 39.5252029666192)
    (40, 31.7611855140392)
    (60, 26.6360088992369)
    (80, 22.4896767251689)
    };

    \addplot[
    color=red,
    mark=square,
    ]
    coordinates {
    (0, 59.7432488908595)
    (20, 41.5750902015962)
    (40, 32.4368222914138)
    (60, 27.2615097289864)
    (80, 23.0398940373407)
    };

\end{axis}
\end{tikzpicture}
\end{minipage}%

\vspace{-0.2cm}
\caption{Simple Wallet Smart Contracts - Threads Latency Analysis}
\label{graph:delay-graphs}
\end{Graph}
\end{center}

\vspace{-1cm}

\subsection{Threads Crash Resilience Analysis}
This experiment evaluates the performance of the framework in the event of threads crashing. It was specifically conducted on the lockfree bin scheduler, as other schedulers were lock-based and did not incorporate mechanisms to handle thread crashes. \grfref{crash-graphs} showcases the performance of the lockfree bin scheduler in case of threads crashing.

In \grfref{crash-graphs}, while maintaining a fixed number of 600 transactions, we intentionally crashed a few threads to evaluate the performances of the frameworks. All the other frameworks, except Lockfree MBPS, were unable to execute the transactions as they are not thread-crash-tolerant algorithms. We varied the number of crashed threads from 1 to 99\%, and observed that Lockfree MBPS completed the execution, although it took more time. It completed its execution even when all the other threads except one were crashed.
\begin{center}
\begin{Graph}[!ht]
\begin{minipage}{0.33\textwidth}
\vspace{0.85cm}
\pgfplotsset{width=4.5cm,compat=1.9}
\centering
\begin{tikzpicture}
\label{graph:crash1}
\begin{axis}[
    title = {(a) 600 txns, CP1},
    title style={at={(0.5,-0.5)},anchor=north}, 
    xlabel={\tiny Threads Crashing [\%]},
    ylabel={\tiny Execution Time [seconds]},
    xtick={0,20,40,60,80,100},
    xticklabel style={font=\tiny, /pgf/number format/1000 sep= },
    yticklabel style={font=\tiny, /pgf/number format/1000 sep= },
    y label style={yshift=-6pt},
    legend style={at={(0.5,1.2)},
    anchor=north,legend columns=-1},
    ymajorgrids=true,
    grid style=dashed,
]

    \addplot[
    color=red,
    mark=square,
    ]
    coordinates {
    (0,163.612132072448)
    (20,365.551707744598)
    (40,525.877628326416)
    (60,657.422013282775)
    (80,819.843959808349)
    (99,964.266777038574)
    };

\end{axis}
\end{tikzpicture}
\end{minipage}%
\hfill
\begin{minipage}{0.33\textwidth}
\pgfplotsset{width=4.5cm,compat=1.9}
\begin{tikzpicture}
\label{graph:crash2}
\begin{axis}[
    title = {(b) 33\% crash ratio, CP2},
    title style={at={(0.5,-0.5)},anchor=north}, 
    xlabel={\tiny No. of Transactions},
    ylabel={\tiny Execution Time [seconds]},
    xtick={200,400,600,800,1000,1200},
    xticklabel style={font=\tiny, /pgf/number format/1000 sep= , rotate=15},
    yticklabel style={font=\tiny, /pgf/number format/1000 sep= },
    y label style={yshift=-6pt}, 
    legend style={at={(0.5,1.5)},
    anchor=north,legend columns=-1},
    ymajorgrids=true,
    grid style=dashed,
]

    \addplot[
    color=red,
    mark=square,
    ]
    coordinates {
    (200,223.774311542511)
    (400,366.06077671051)
    (600,514.296109676361)
    (800,667.628016471862)
    (1000,822.176465988159)
    (1200,986.234416961669)
    };

    \legend{\tiny Lockfree}{\hspace{0 cm}}

\end{axis}
\end{tikzpicture}
\end{minipage}%
\hfill
\begin{minipage}{0.33\textwidth}
\vspace{1.1cm}
\pgfplotsset{width=4.5cm,compat=1.9}
\begin{tikzpicture}
\label{graph:crash3}
\begin{axis}[
    title={(c) 600 txns, CP3},
    title style={at={(0.5,-0.5)},anchor=north}, 
    xlabel={\tiny Threads Crashing [\%]},
    ylabel={\tiny Throughput [txns/s]},
    xtick={0,20,40,60,80,100},
    xticklabel style={font=\tiny, /pgf/number format/1000 sep= },
    yticklabel style={font=\tiny, /pgf/number format/1000 sep= },
    y label style={yshift=-4pt},
    legend style={at={(0.5,1.2)},
    anchor=north,legend columns=-1},
    ymajorgrids=true,
    grid style=dashed,
]

    \addplot[
    color=red,
    mark=square,
    ]
    coordinates {
    (0,60.1248036809655)
    (20,28.1937586876562)
    (40,21.3184126781384)
    (60,16.7063740686477)
    (80,13.9203847602228)
    (99,11.9561828536549)
    };

\end{axis}
\end{tikzpicture}
\end{minipage}%

\vspace{-0.2cm}
\caption{Simple Wallet Smart Contracts - Threads Crashing Analysis}
\label{graph:crash-graphs}
\end{Graph}
\end{center}

\section{Conclusion and Future Work}
\label{sec:conclusion}
In this paper, we present the MBPS framework, developed to parallelize blockchain smart contract transactions and harness the capabilities of multicore systems. Our framework consists of three variants: Standard MBPS, Assisted MBPS, and Lockfree MBPS, each offering distinct features to enhance transaction execution efficiency while maintaining deterministic order.
Through experiments, we assessed the performance of our framework against existing parallel execution frameworks on the Hyperledger Sawtooth blockchain platform. The results demonstrated significant improvements in throughput and execution time, particularly in scenarios involving high transaction volumes, latency, and thread crashes.
Our findings indicate that the Lockfree MBPS framework, in particular, excels in resilience to thread crashes and efficient transaction processing without contention. Additionally, the Assisted MBPS framework shows promising results in mitigating latency issues through the introduction of helper threads.

In the future, our objective is to expand the proposed MBPS framework into distributed settings, offering a promising direction for both research and development. This extension will involve designing the algorithm to operate efficiently in distributed environments, addressing challenges such as network latency, communication overhead, and synchronization across multiple nodes.
Additionally, we will explore further optimizations and enhancements to refine the MBPS framework. This will involve fine-tuning algorithms for conflict detection and bin assignment stages to improve efficiency and reduce overhead.
Furthermore, we are investigating the integration of the MBPS framework with emerging technologies such as machine learning for dynamic scheduling and resource allocation. This integration holds the potential to unlock even greater scalability and performance within blockchain ecosystems.

%
%

\bibliographystyle{splncs04}
\bibliography{citations}

\begin{thebibliography}{10}
\providecommand{\url}[1]{\texttt{#1}}
\providecommand{\urlprefix}{URL }
\providecommand{\doi}[1]{https://doi.org/#1}

\bibitem{bitcoin}
Bitcoin, \url{https://bitcoin.org/}

\bibitem{ethereum}
Ethereum, \url{https://ethereum.org/}

\bibitem{sawtooth:url}
{Hyperledger Sawtooth Whitepaper}. \url{https://8112310.fs1.hubspotusercontent-na1.net/hubfs/8112310/Hyperledger/Offers/Hyperledger_Sawtooth_WhitePaper.pdf}

\bibitem{Anjana2020netysObjSC}
Anjana, P.S., Attiya, H., Kumari, S., Peri, S., Somani, A.: Efficient concurrent execution of smart contracts in blockchains using object-based transactional memory. In: Networked Systems. pp. 77--93. Springer International Publishing, Cham (2021)

\bibitem{Anjana2021OptSmart}
Anjana, P.S., Kumari, S., Peri, S., Rathor, S., Somani, A.: Optsmart: A space efficient optimistic concurrent execution of smart contracts. Distrib Parallel Databases  (2022), \url{https://link.springer.com/article/10.1007/s10619-022-07412-y}

\bibitem{antonopoulos2014mastering}
Antonopoulos, A.: Mastering Bitcoin: Unlocking Digital Cryptocurrencies. O'Reilly Media (2014), \url{https://books.google.co.in/books?id=IXmrBQAAQBAJ}

\bibitem{inbook}
Attaran, M., Gunasekaran, A.: The Evolution of Blockchain, pp. 7--12 (09 2019). \doi{10.1007/978-3-030-27798-7_2}

\bibitem{Baheti2019DiPETrans}
Baheti, S., Anjana, P.S., Peri, S., Simmhan, Y.: Dipetrans: A framework for distributed parallel execution of transactions of blocks in blockchain. Concurrency and Computation: Practice and Experience  \textbf{n/a},  e6804 (2022). \doi{https://doi.org/10.1002/cpe.6804}, \url{https://onlinelibrary.wiley.com/doi/abs/10.1002/cpe.6804}

\bibitem{Buterin2013}
Buterin, V.: Ethereum white paper: A next generation smart contract \& decentralized application platform  (2013), \url{https://github.com/ethereum/wiki/wiki/White-Paper}

\bibitem{Dickerson+:ACSC:PODC:2017}
Dickerson, T., Gazzillo, P., Herlihy, M., Koskinen, E.: {Adding Concurrency to Smart Contracts}. pp. 303--312. PODC '17, ACM, New York, NY, USA (2017)

\bibitem{10.1109/TPDS.2021.3095234}
Liu, J., Li, P., Cheng, R., Asokan, N., Song, D.: Parallel and asynchronous smart contract execution. IEEE Trans. Parallel Distrib. Syst.  \textbf{33}(5),  1097–1108 (may 2022). \doi{10.1109/TPDS.2021.3095234}, \url{https://doi.org/10.1109/TPDS.2021.3095234}

\bibitem{article}
Nakamoto, S.: Bitcoin: A peer-to-peer electronic cash system. Cryptography Mailing list at https://metzdowd.com  (03 2009)

\bibitem{10.1007/978-3-031-39698-4_13}
Piduguralla, M., Chakraborty, S., Anjana, P.S., Peri, S.: Dag-based efficient parallel scheduler for blockchains: Hyperledger sawtooth as a case study. In: Cano, J., Dikaiakos, M.D., Papadopoulos, G.A., Peric{\`a}s, M., Sakellariou, R. (eds.) Euro-Par 2023: Parallel Processing. pp. 184--198. Springer Nature Switzerland, Cham (2023)

\bibitem{Vikram&Herlihy:EmpSdy-Con:Tokenomics:2019}
Saraph, V., Herlihy, M.: {An Empirical Study of Speculative Concurrency in Ethereum Smart Contracts}. Tokenomics '19 (2019)

\bibitem{journals/firstmonday/Szabo97}
Szabo, N.: Formalizing and securing relationships on public networks. First Monday  \textbf{2}(9) (1997), \url{http://dblp.uni-trier.de/db/journals/firstmonday/firstmonday2.html#Szabo97}

\bibitem{tapscott2016blockchain}
Tapscott, D., Tapscott, A.: Blockchain Revolution: How the Technology Behind Bitcoin Is Changing Money, Business, and the World. Penguin Publishing Group (2016), \url{https://books.google.co.in/books?id=NqBiCgAAQBAJ}

\bibitem{wood2014ethereum}
Wood, G.: Ethereum: A secure decentralised generalised transaction ledger

\bibitem{article-yan}
Yan, W., Li, J., Liu, W., Tan, A.: Efficient concurrent execution of smart contracts in blockchain sharding. Security and Communication Networks  \textbf{2021},  1--15 (02 2021). \doi{10.1155/2021/6688168}

\bibitem{10.1145/3278312.3278321}
Yu, W., Luo, K., Ding, Y., You, G., Hu, K.: A parallel smart contract model. In: Proceedings of the 2018 International Conference on Machine Learning and Machine Intelligence. p. 72–77. MLMI '18, Association for Computing Machinery, New York, NY, USA (2018). \doi{10.1145/3278312.3278321}, \url{https://doi.org/10.1145/3278312.3278321}

\end{thebibliography}

\end{document}